\documentclass[5p,times,twocolumn]{elsarticle}

\usepackage{amsmath,amssymb,amsthm,bm}
\usepackage{microtype}
\usepackage{graphicx}
\usepackage[hidelinks]{hyperref}

\newtheorem{theorem}{Theorem}
\newtheorem{proposition}[theorem]{Proposition}
\newcommand{\dd}{\,\mathrm{d}}

\newcommand{\weakto}{\rightharpoonup}
\newcommand{\av}[1]{\langle #1\rangle}
\DeclareMathOperator{\Cov}{Cov}
\DeclareMathOperator{\Var}{Var}
\allowdisplaybreaks

\begin{document}

\begin{frontmatter}

\title{The monatomic limit of polyatomic gases as concentration of the internal-energy measure}

\author[unibo,lincei]{Tommaso Ruggeri}
\ead{tommaso.ruggeri@unibo.it}

\affiliation[unibo]{organization={Department of Mathematics, University of Bologna},
            city={Bologna},
            country={Italy}}

\affiliation[lincei]{organization={Accademia Nazionale dei Lincei},
            city={Rome},
            country={Italy}}

\begin{abstract}
In Rational Extended Thermodynamics, the monatomic limit of polyatomic gases
is often singular. For the internal-state density $\phi(I)=I^a$, with
$a=(D-5)/2$, we prove that the normalized equilibrium distribution of the
molecular internal energy $I$ concentrates at $I=0$ as $D\to3$, providing a
rigorous justification of the Dirac prescription and its first-order
correction. In classical ET14, the internal-mode energy differs from
equilibrium by $-3\Pi/2$; positivity yields $-p<\Pi<(D-3)p/3$, while the
exact entropy develops a boundary corner at $\Pi=0$. Negative dynamic
pressures remain admissible, but fixed negative values do not approach the
monatomic state with the prescribed total energy. This explains the need for
compatible initial data. In
the relativistic case, the internal energy is conditionally gamma distributed
at fixed momentum; the constitutive integrals follow directly, and we derive
an exact energy identity and the first correction to the Synge energy.
\end{abstract}

\begin{keyword}
Rational Extended Thermodynamics \sep polyatomic gases \sep monatomic limit
\sep dynamic pressure \sep maximum entropy principle \sep relativistic gases
\end{keyword}

\end{frontmatter}

\section{Introduction}
Rational Extended Thermodynamics (RET) describes nonequilibrium gases
through hyperbolic balance laws; the terminology and systematic formulation
of RET are given in the second edition of M\"uller and Ruggeri
\cite{MullerRuggeri1998}. The systems are closed either by the entropy
principle or, at the kinetic level, by the Maximum Entropy Principle (MEP).
The MEP approach and its relation to extended thermodynamics were already
discussed in the first edition of the monograph \cite{MullerRuggeri1993}.
The monatomic moment theories and their extension to polyatomic gases are
presented in \cite{RuggeriSugiyama2021}, to which we refer for the general
background and for further references. For the history of MEP we refer to
the review \cite{ArimaRuggeri2025}.

For polyatomic gases the microscopic internal energy $I\geq0$ of a
molecule and a density of states
$\phi(I)$ enter the molecular description
\cite{BorgnakkeLarsen1975,Bourgat1994}, also in the presence of
chemical reactions \cite{DMS2005}. Here $I$ describes the molecular
internal modes, distinct from the macroscopic internal energy, which
also includes the contribution of translational thermal motion.
In the polytropic model,
\begin{equation}
 \phi(I)=I^a,\qquad a=\frac{D-5}{2},\qquad
 \alpha=a+1=\frac{D-3}{2}.
 \label{eq:aD}
\end{equation}
Here $D$ is the number of molecular degrees of freedom, and the
dimensionless parameter $\alpha$ measures the departure from the
monatomic endpoint $D=3$, which corresponds to the singular value
$a=-1$.

The classical macroscopic $D\to3$ limit was established for the
14-field extended-thermodynamic theory (ET14) \cite{ArimaEtAl2013} and
for theories with many moments \cite{ArimaEtAl2016}; see also
\cite[Chap.~10]{RuggeriSugiyama2021}. These works use the binary hierarchy of
moments; beyond 14 moments, the physically appropriate classical hierarchy,
in which the moments contain the full molecular energy $m\xi^2/2+I$, is the
one of \cite{ACPR2021}, obtained as the classical limit of the relativistic
theory. Relativistically,
the kinetic parameter limit $a\to-1^+$ was used to show that specific
polyatomic RET systems converge to the monatomic theory of Liu, M\"uller and
Ruggeri \cite{LMR1986}; see \cite{CarrisiPennisiRuggeri2019,ArimaCarrisi2024}.
In the classical theories and in the relativistic 15-moment theory the
limit is singular: some fields become redundant, and convergence to the monatomic theory requires initial data compatible with
the monatomic constraints, e.g.\ vanishing dynamic pressure.
The simplest theory in which this happens is the six-field theory, in
which the dynamic pressure is the only nonequilibrium variable. It was
first proposed in \emph{Physics Letters A} \cite{ATRS2012},
then made fully nonlinear \cite{ARST2015,Ruggeri2016}, and extended to
non-polytropic gases \cite{BRS2018} and to separate rotational and
vibrational relaxation \cite{ARS2017}; these developments are collected in
\cite[Chaps.~12--15]{RuggeriSugiyama2021}.
Alongside those limiting procedures, the prescription
$\phi(I)\dd I=\delta_0(\dd I)$ directly sets the internal energy to zero
in moment integrals \cite{PennisiRuggeri2017}; discrete internal-state
measures, including the Dirac mass for the monatomic gas, also occur in
the general framework of \cite{BorsoniBisiGroppi2022}. The connection
between this prescription and the singular power law requires
normalization: $I^{\alpha-1}\dd I$ itself does not approach a finite Dirac
mass.

In this paper we establish that connection and use it to explain the
singular character of the limit: a concentration theorem with its first
correction (Section~\ref{sec:conc}); in classical ET14, an exact identity
for the dynamic pressure, a bound valid for every nonnegative distribution
and an entropic corner at $\Pi=0$, which show why compatible data are
needed (Section~\ref{sec:classical}); in the relativistic case, the limit
of the constitutive integrals and an exact formula for the energy with its
first correction to the Synge formula (Section~\ref{sec:rel}).

\section{Concentration and its first correction}
\label{sec:conc}
The unit Dirac mass at the endpoint of $[0,\infty)$ is denoted by
$\delta_0$. Powers of $I$ may be understood using any fixed reference
energy; its multiplicative factor cancels in normalized averages.
The symbol $\rightharpoonup$ denotes weak convergence of probability
measures. Unless otherwise stated, limits and asymptotic expansions are taken as
$\alpha\to0^+$ with the physical parameters fixed; $o(\alpha)/\alpha\to0$,
whereas $O(\alpha^2)/\alpha^2$ is bounded.

\begin{theorem}[Concentration at zero]
\label{thm:concentration}
Let $w:[0,\infty)\to[0,\infty)$ be measurable, locally bounded, continuous at zero with
$w(0)>0$, and suppose that for some $\eta_0>0$,
\begin{equation}
 C_0=\int_1^\infty I^{-1+\eta_0}w(I)\dd I<\infty.
 \label{eq:tail}
\end{equation}
For $0<\alpha\leq\eta_0$, set
\begin{equation}
 Z_\alpha=\int_0^\infty I^{\alpha-1}w(I)\dd I,
 \qquad
 \dd\mu_\alpha=\frac{I^{\alpha-1}w(I)}{Z_\alpha}\dd I.
 \label{eq:measure}
\end{equation}
Then $\alpha Z_\alpha\to w(0)$ and
$\mu_\alpha\weakto\delta_0$. More precisely,
$\int G\dd\mu_\alpha\to G(0)$ for every locally bounded
measurable observable $G:[0,\infty)\to\mathbb R$ continuous at zero such that, for some
$0<\eta_1\leq\eta_0$,
\begin{equation}
 C_G=\int_1^\infty |G(I)|I^{-1+\eta_1}w(I)\dd I<\infty.
 \label{eq:Gtail}
\end{equation}
\end{theorem}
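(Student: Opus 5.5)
The plan is to split every integral at a small threshold $\delta\in(0,1]$ and compare the near-zero part with the explicit integral $\int_0^\delta I^{\alpha-1}\dd I=\delta^\alpha/\alpha$, which tends to $+\infty$ like $1/\alpha$. The far part stays bounded uniformly in $\alpha$. Since the near part is of order $1/\alpha$ while the far part is $O(1)$, all the mass of $\mu_\alpha$ ends up near zero.

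\emph{Step 1 ($\alpha Z_\alpha\to w(0)$).} Given $\varepsilon>0$, I use continuity of $w$ at zero to choose $\delta\le1$ with $|w(I)-w(0)|\le\varepsilon$ on $[0,\delta]$. Then
\begin{equation*}
\alpha\int_0^\delta I^{\alpha-1}w\dd I\in\bigl[\delta^\alpha(w(0)-\varepsilon),\,\delta^\alpha(w(0)+\varepsilon)\bigr],
\end{equation*}
and $\delta^\alpha\to1$ as $\alpha\to0^+$.

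The remaining integral over $[\delta,\infty)$ is bounded independently of $\alpha\in(0,\eta_0]$, in two pieces:
\begin{itemize}
\item On $[\delta,1]$, local boundedness of $w$ gives $I^{\alpha-1}\le\delta^{-1}$, so this piece is at most $\delta^{-1}\sup_{[0,1]}w$.
\item On $[1,\infty)$, $I^{\alpha-1}\le I^{\eta_0-1}$ because $\alpha\le\eta_0$, so this piece is at most $C_0$.
\end{itemize}
Multiplying by $\alpha$ makes the far part vanish. Hence $\limsup$ and $\liminf$ of $\alpha Z_\alpha$ lie within $\varepsilon$ of $w(0)$, and $\varepsilon$ is arbitrary.

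\emph{Step 2 (observables).} Fix $G$ as in the statement and write
\begin{equation*}
\int G\dd\mu_\alpha-G(0)=\frac{1}{Z_\alpha}\int_0^\infty\bigl(G(I)-G(0)\bigr)I^{\alpha-1}w\dd I.
\end{equation*}
Given $\varepsilon$, I choose $\delta\le1$ with $|G-G(0)|\le\varepsilon$ on $[0,\delta]$. The near part is then bounded by $\varepsilon$ times the $\mu_\alpha$-mass of $[0,\delta]$, hence by $\varepsilon$.

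The far part is bounded in absolute value by $\alpha^{-1}\cdot\alpha\int_\delta^\infty(|G|+|G(0)|)I^{\alpha-1}w\dd I$ divided by $Z_\alpha$. Its numerator integral is handled as in Step 1:
\begin{itemize}
\item On $[\delta,1]$, local boundedness of $G$ and $w$ bounds the integral.
\item On $[1,\infty)$ and for $\alpha\le\eta_1$, the $|G|$ term is at most $C_G$, using $I^{\alpha-1}\le I^{\eta_1-1}$.
\item The $|G(0)|$ term is at most $|G(0)|C_0$.
\end{itemize}
Since $1/Z_\alpha=\alpha/(\alpha Z_\alpha)\sim\alpha/w(0)\to0$ by Step 1, the far part vanishes. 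Here $w(0)>0$ is used.

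Weak convergence $\mu_\alpha\weakto\delta_0$ is then the special case of bounded continuous $G$, for which $C_G\le\|G\|_\infty C_0$ with $\eta_1=\eta_0$.

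\emph{Main obstacle.} The only delicate point is uniformity in $\alpha$ of the tail bounds. This is exactly why the statement restricts to $\alpha\le\eta_0$, respectively $\alpha\le\eta_1$. On $[1,\infty)$ the integrand $I^{\alpha-1}$ is increasing in $\alpha$, so the worst case is the largest exponent allowed, which is dominated by \eqref{eq:tail} or \eqref{eq:Gtail}. Measurability and local boundedness only serve to make the middle pieces on $[\delta,1]$ finite.
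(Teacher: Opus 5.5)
Your proposal is correct and follows essentially the same route as the paper's proof. Both split at $\delta$ (the paper's $b$) and at $1$, compare the near part with $\delta^\alpha/\alpha$, and bound the pieces on $[\delta,1]$ and $[1,\infty)$ uniformly in $\alpha$ using local boundedness and the tail hypotheses; both then use the identity $\int G\dd\mu_\alpha-G(0)=Z_\alpha^{-1}\int_0^\infty (G-G(0))I^{\alpha-1}w\dd I$ and kill the far part with $1/Z_\alpha\sim\alpha/w(0)$. The only differences are cosmetic: your middle bound is the cruder $\delta^{-1}\sup_{[0,1]}w$ instead of the paper's $M\log(1/b)$, and you argue with an $\varepsilon$--$\delta$ choice rather than the modulus $\omega_w(b)$.
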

\begin{proof}
We first identify the divergent part of the normalizing integral.
Fix $b\in(0,1)$ and put $M=\sup_{[0,1]}w$. Write
\[
 Z_\alpha=Z^{(1)}_\alpha+Z^{(2)}_\alpha+Z^{(3)}_\alpha,
\]
where the three integrals are over $(0,b)$, $[b,1]$, and
$(1,\infty)$, respectively. On the first interval separate the
constant value $w(0)$ from the remainder:
\begin{align*}
 \alpha Z^{(1)}_\alpha
 &=w(0)\alpha\int_0^b I^{\alpha-1}\dd I\\
 &\quad+\alpha\int_0^b I^{\alpha-1}[w(I)-w(0)]\dd I\\
 &=w(0)b^\alpha
   +\alpha\int_0^b I^{\alpha-1}[w(I)-w(0)]\dd I.
\end{align*}
Here $\int_0^b I^{\alpha-1}\dd I=b^\alpha/\alpha$, since
$\alpha>0$. Define the modulus at the origin by
$\omega_w(b)=\sup_{0\leq I<b}|w(I)-w(0)|$. It follows that
\[
 |\alpha Z^{(1)}_\alpha-w(0)b^\alpha|
 \leq\omega_w(b)b^\alpha\leq\omega_w(b).
\]
On the middle interval $I^\alpha\leq1$, so
\[
 0\leq\alpha Z^{(2)}_\alpha
 \leq\alpha M\int_b^1\frac{\dd I}{I}
 =\alpha M\log(1/b).
\]
On the last interval, $I^{\alpha-1}\leq I^{\eta_0-1}$ for
$\alpha\leq\eta_0$, and hence
$0\leq\alpha Z^{(3)}_\alpha\leq\alpha C_0$.
These estimates also prove that $Z_\alpha$ is finite. Its strict
positivity follows from $w(0)>0$ and continuity at zero. Combining
all three estimates gives
\begin{align*}
 |\alpha Z_\alpha-w(0)|
 &\leq w(0)|b^\alpha-1|+\omega_w(b)\\
 &\quad+\alpha M\log(1/b)+\alpha C_0.
\end{align*}
For fixed $b$, $b^\alpha\to1$. Consequently
\[
 \limsup_{\alpha\to0^+}|\alpha Z_\alpha-w(0)|
 \leq\omega_w(b).
\]
Only now do we let $b\to0$. Continuity of $w$ makes the right-hand
side vanish, proving $\alpha Z_\alpha\to w(0)$.

We next prove convergence of the observable, including the case where
it is unbounded at infinity. Normalization gives the exact identity
\[
 \int G\dd\mu_\alpha-G(0)
 =\frac{1}{Z_\alpha}\int_0^\infty
       [G(I)-G(0)]I^{\alpha-1}w(I)\dd I.
\]
Put $\omega_G(b)=\sup_{0\leq I<b}|G(I)-G(0)|$ and
$M_G=\sup_{[0,1]}|G|$. The contribution of $(0,b)$ in absolute
value is at most
\[
 \omega_G(b)\frac{Z^{(1)}_\alpha}{Z_\alpha}
 \leq\omega_G(b).
\]
On $[b,1]$, $|G(I)-G(0)|\leq2M_G$, and the unnormalized
integral is at most $2M_GM\log(1/b)$. Finally, for
$0<\alpha\leq\eta_1$, the tail is bounded by
\begin{align*}
 &\int_1^\infty |G(I)-G(0)|I^{\alpha-1}w(I)\dd I\\
 &\qquad\leq C_G+|G(0)|C_0.
\end{align*}
Thus, with the finite constant
$C_b=2M_GM\log(1/b)+C_G+|G(0)|C_0$, independent of $\alpha$,
\[
 \left|\int G\dd\mu_\alpha-G(0)\right|
 \leq\omega_G(b)+\frac{\alpha C_b}{\alpha Z_\alpha}.
\]
The second term tends to zero for each fixed $b$, because
$\alpha Z_\alpha\to w(0)>0$. Taking the upper limit and then
letting $b\to0$ proves the assertion. In particular, every bounded
continuous $G$ satisfies the tail hypothesis with $\eta_1=\eta_0$.
The assertion for these test functions is precisely weak convergence
to the unit mass at zero.
\end{proof}

The normalization explains how a singular density can have a finite
monatomic limit: as $Z_\alpha\sim w(0)/\alpha$, the mass accumulated near
zero is balanced by a vanishing normalization factor. For each fixed $I>0$ the
density of $\mu_\alpha$ tends to zero, while its total mass remains one:
the limiting unit mass is located at the endpoint $I=0$.

This concentration does not mean that positive internal energies are
absent before the limit. For every fixed threshold $b>0$,
\begin{equation}
 \mu_\alpha([b,\infty))
 =\frac{\alpha}{w(0)}\int_b^\infty\frac{w(I)}{I}\dd I
 +o(\alpha).
 \label{eq:tailprob}
\end{equation}
Indeed, on $[b,\infty)$ the integrand
$I^{\alpha-1}w(I)$ is dominated by an integrable function, using
local boundedness away from zero and \eqref{eq:tail} at infinity.
Its integral converges to $\int_b^\infty w(I)\dd I/I$; dividing by
$Z_\alpha\sim w(0)/\alpha$ proves \eqref{eq:tailprob}.

\begin{proposition}[First-order action on observables]
\label{prop:first}
Under the hypotheses on $w$ above, let $G$ be a measurable real-valued
function with a specified finite value $G(0)$, and suppose that
$H(I)=(G(I)-G(0))/I$, defined for $I>0$, satisfies
$\int_0^\infty |H|(1+I^{\eta_0})w\dd I<\infty$. Then
\begin{multline}
 \int G\dd\mu_\alpha-G(0)\\
 =\frac{\alpha}{w(0)}
 \int_0^\infty\frac{G(I)-G(0)}{I}w(I)\dd I
 +o(\alpha).
 \label{eq:first}
\end{multline}
\end{proposition}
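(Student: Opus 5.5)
The idea is to start from the exact normalization identity already used in the proof of Theorem~\ref{thm:concentration}, and then rewrite $G(I)-G(0)=I\,H(I)$:
\[
 \int G\dd\mu_\alpha-G(0)
 =\frac{1}{Z_\alpha}\int_0^\infty [G(I)-G(0)]I^{\alpha-1}w(I)\dd I
 =\frac{1}{Z_\alpha}\int_0^\infty H(I)\,I^{\alpha}w(I)\dd I .
\]
Multiplying and dividing by $\alpha$, the claim \eqref{eq:first} reduces to two limits. The first is $\alpha Z_\alpha\to w(0)$, which is exactly Theorem~\ref{thm:concentration}. The second is
\[
 J_\alpha=\int_0^\infty H(I)I^\alpha w(I)\dd I\;\longrightarrow\; J_0=\int_0^\infty H(I)w(I)\dd I .
\]
Granting both, the right-hand side equals $\alpha J_\alpha/(\alpha Z_\alpha)$. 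Since
\[
 \frac{J_\alpha}{\alpha Z_\alpha}-\frac{J_0}{w(0)}\to0,
\]
the difference between the left-hand side and $\alpha J_0/w(0)$ is $\alpha\cdot o(1)=o(\alpha)$. This is precisely \eqref{eq:first}.

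The limit $J_\alpha\to J_0$ will come from dominated convergence. Pointwise, for each $I>0$, we have $I^\alpha\to1$, so the integrand tends to $H(I)w(I)$. For domination with $0<\alpha\le\eta_0$:
\begin{itemize}
 \item on $(0,1]$, $I^\alpha\le1$, so the integrand is bounded by $|H|w$;
 \item on $(1,\infty)$, $I^\alpha\le I^{\eta_0}$, so it is bounded by $|H|I^{\eta_0}w$.
\end{itemize}
Hence $|H|(1+I^{\eta_0})w$ is a dominating function independent of $\alpha$, and it is integrable by hypothesis.

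This is also the only delicate point in the argument. The hypothesis is stated on all of $[0,\infty)$, and it is exactly what controls both regimes at once: integrability of $|H|w$ near zero handles the $I^{\alpha-1}$ singularity, which has been absorbed by the factor $I$ in $G(I)-G(0)=I\,H(I)$, and the $I^{\eta_0}$ weight handles the tail. Note that no continuity of $G$ at zero is needed: the concentration theorem is invoked only for $Z_\alpha$, not for $G$.

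Finally, the rewriting is legitimate because $G(I)-G(0)=I\,H(I)$ holds for $I>0$, and the point $I=0$ has Lebesgue measure zero. All integrals are finite by the same domination, so the identity holds exactly for each $\alpha\in(0,\eta_0]$ and not merely in the limit.
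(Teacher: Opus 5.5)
Your proof is correct and is essentially the paper's argument. Like the paper, you write $G-G(0)=I\,H$ to reduce the claim to $\int_0^\infty H I^\alpha w\dd I/(\alpha Z_\alpha)$, apply dominated convergence to the numerator with the majorant $|H|(1+I^{\eta_0})w$, and use $\alpha Z_\alpha\to w(0)$ from Theorem~\ref{thm:concentration} for the denominator.
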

\begin{proof}
With $H=(G-G(0))/I$,
\[
 \frac{\int G\dd\mu_\alpha-G(0)}{\alpha}
 =\frac{\int_0^\infty H(I)I^\alpha w(I)\dd I}{\alpha Z_\alpha}
 \longrightarrow \frac1{w(0)}\int_0^\infty H(I)w(I)\dd I.
\]
Indeed, $I^\alpha\leq1+I^{\eta_0}$ for $0<\alpha\leq\eta_0$, so dominated
convergence applies to the numerator, while $\alpha Z_\alpha\to w(0)$ by
Theorem~\ref{thm:concentration}. This is \eqref{eq:first}; no differentiability
of $w$ at zero is required.
\end{proof}

Thus, for exponential weights, every mean $\int I^r\dd\mu_\alpha$ with
$r\geq1$ is of order $\alpha$, with coefficient
$w(0)^{-1}\int_0^\infty I^{r-1}w(I)\dd I$: the first correction involves the
whole range of internal energies, not only $G'(0)$.

\section{Classical gas}
\label{sec:classical}
Let $m>0$ be the molecular mass, $T>0$ the temperature, $k_B$ the
Boltzmann constant, $n$ the number density, and $\rho=mn$ the mass
density. Put $\theta=k_BT$ and
use the rest-frame molecular velocity $\bm{\xi}\in\mathbb R^3$.
Velocity integrals extend over $\mathbb R^3$, and internal-energy
integrals over $[0,\infty)$.
The normalized classical equilibrium particle law $\nu_\alpha^{\rm cl}$
factorizes into the Maxwellian velocity density $M_T$ and the
internal-energy probability measure $\mu_\alpha^{\rm cl}$:
\begin{equation}
\begin{aligned}
 \dd\nu_\alpha^{\rm cl}
 &=M_T(\bm{\xi})\dd^3\bm{\xi}\,\dd\mu_\alpha^{\rm cl}(I),\\
 M_T(\bm{\xi})&=\left(\frac{m}{2\pi\theta}\right)^{3/2}
             e^{-m|\bm{\xi}|^2/(2\theta)},\\
 \dd\mu_\alpha^{\rm cl}(I)
 &=\frac{I^{\alpha-1}e^{-I/\theta}}
        {\Gamma(\alpha)\theta^\alpha}\dd I,
\end{aligned}
 \label{eq:classical}
\end{equation}
where $\Gamma$ is Euler's gamma function; the substitution $y=I/\theta$
proves the normalization.
In this section $\av{A}_\alpha:=\int A\dd\nu_\alpha^{\rm cl}$
for any integrable scalar observable $A(\bm{\xi},I)$.
Theorem~\ref{thm:concentration}, with $w=e^{-I/\theta}$, gives
$\mu_\alpha^{\rm cl}\weakto\delta_0$.
For any polynomial $P(\bm{\xi},I)$, set
$G(I)=\int P(\bm{\xi},I)M_T(\bm{\xi})\dd^3\bm{\xi}$.
Gaussian integrability makes $G$ a polynomial, so the theorem gives
\begin{equation}
 \int P\dd\nu_\alpha^{\rm cl}
 \longrightarrow\int P(\bm{\xi},0)M_T(\bm{\xi})\dd^3\bm{\xi}.
 \label{eq:clmom}
\end{equation}
This proves the Dirac prescription for these normalized equilibrium
moments, rather than merely inserting it formally.

The brackets in the following formula mean the normalized equilibrium
average, not an unnormalized internal-state integral. Since $I^r$ does
not depend on velocity and $\int M_T\dd^3\bm\xi=1$, we have, for
$r=1,2,\ldots$,
\[
 \av{I^r}_\alpha
 :=\int I^r\dd\nu_\alpha^{\rm cl}
 =\frac{\int_0^\infty I^{\alpha+r-1}e^{-I/\theta}\dd I}
        {\Gamma(\alpha)\theta^\alpha}.
\]
Substitute $I=\theta y$ in the numerator and apply
$\Gamma(z+1)=z\Gamma(z)$ successively $r$ times. This gives
\begin{align}
 \av{I^r}_\alpha
 &=\theta^r\frac{\Gamma(\alpha+r)}{\Gamma(\alpha)}
 =\theta^r\prod_{j=0}^{r-1}(\alpha+j)\nonumber\\
 &=\alpha(r-1)!\theta^r+O(\alpha^2).
 \label{eq:clpower}
\end{align}
In particular $\av I_\alpha=\alpha\theta$ exactly;
adding the translational contribution $3\theta/2$ gives
\begin{equation}
 m\varepsilon=\left(\frac32+\alpha\right)k_BT,
 \qquad \rho\varepsilon=\frac D2 nk_BT.
 \label{eq:clenergy}
\end{equation}
Here $\varepsilon$ is the equilibrium specific internal energy,
including both translational and molecular internal contributions.
This recovers the equilibrium content of the macroscopic $D\to3$ limit.

Formula \eqref{eq:clpower} is a particular case of a recurrence that holds
for a temperature-independent density of states $\phi$ for which the relevant
partition functions and moments exist and differentiation with respect to $T$
is justified. This includes non-polytropic gases, whose specific internal
energy $\varepsilon(T)$ is not linear in $T$.
Let $\varepsilon^I(T)=\varepsilon(T)-\frac32\frac{k_B}{m}T$ be the equilibrium
specific energy of the internal modes, and
$\bar\varepsilon^I=m\varepsilon^I/(k_BT)=m\varepsilon/(k_BT)-\frac32$. The
moments $\bar A_r=\av{(I/k_BT)^r}$ of the equilibrium internal-energy law
satisfy \cite{ACPR2021}
\begin{equation}
 \bar A_{r+1}=T\frac{\mathrm{d}\bar A_r}{\mathrm{d}T}+(r+\bar\varepsilon^I)\bar A_r,
 \qquad \bar A_0=1.
 \label{eq:recurrence}
\end{equation}
For the polytropic gas considered in this section, \eqref{eq:clenergy} gives
$\bar\varepsilon^I=\alpha$, independent of $T$. Then $\bar A_1=\alpha$ and, by
induction, every $\bar A_r$ is independent of $T$, so that
\eqref{eq:recurrence} reduces to $\bar A_{r+1}=(r+\alpha)\bar A_r$. With
$\bar A_0=1$ its solution is $\bar A_r=\alpha(\alpha+1)\cdots(\alpha+r-1)
=\Gamma(\alpha+r)/\Gamma(\alpha)$, which is \eqref{eq:clpower}, because
$\bar A_r=\av{I^r}_\alpha/\theta^r$. For a non-polytropic gas the role of
$\alpha$ is played by $\bar\varepsilon^I(T)=(D(T)-3)/2$, with $D(T)$ defined by
$\rho\varepsilon=\frac{D}{2}nk_BT$. By induction, each $\bar A_r$ with
$r\geq1$ is a polynomial without constant term in
$T^k\mathrm{d}^k\bar\varepsilon^I/\mathrm{d}T^k$, $k=0,\dots,r-1$:
all moments vanish when these quantities do, as happens when the internal
modes freeze at low temperature. In particular
$\bar A_2-\bar A_1^2=mc_v^I/k_B$, with $c_v^I=\mathrm{d}\varepsilon^I/\mathrm{d}T$
the internal specific heat.

\subsection{The dynamic-pressure identity in classical ET14}
The MEP construction of the classical 14-moment theory
\cite[Chap.~7]{RuggeriSugiyama2021} involves two moment
families, one for mass and one for total molecular energy; the scalar moments
$F_{ll}$ and $G_{ll}$ used below belong also to the new hierarchy of
\cite{ACPR2021}, so that what follows holds in both descriptions. Let
$f(\bm{\xi},I)\geq0$ be the molecular number-distribution function,
normalized by $\iint f\phi\dd I\dd^3\bm{\xi}=n$.
The repeated Cartesian index $l$ denotes contraction, summed from
$1$ to $3$. The scalar moments $F_{ll}$ and $G_{ll}$ are
\begin{align*}
 F_{ll}&=\iint m|\bm{\xi}|^2f(\bm{\xi},I)\phi(I)\dd I\dd^3\bm{\xi},\\
 G_{ll}&=\iint (m|\bm{\xi}|^2+2I)f(\bm{\xi},I)\phi(I)\dd I\dd^3\bm{\xi},
\end{align*}
and their difference is $2\iint If\phi\dd I\dd^3\bm{\xi}$. Under the Dirac
prescription it vanishes for every $f$ with a defined trace at $I=0$
and finite integrals; in equilibrium its value is exactly $2n\alpha k_BT$.
In the rest frame define the equilibrium pressure $p=nk_BT$ and the
dynamic pressure $\Pi$ by
\begin{equation}
 F_{ll}=3(p+\Pi),\qquad G_{ll}=2\rho\varepsilon.
 \label{eq:tracefields}
\end{equation}
Away from equilibrium the temperature is defined by matching the
specific internal energy to \eqref{eq:clenergy}. Subtraction therefore
gives the exact identity, valid for every distribution function,
\begin{equation}
 2\iint If\phi\dd I\dd^3\bm\xi=2\alpha p-3\Pi=(D-3)p-3\Pi.
 \label{eq:Piidentity}
\end{equation}
The left-hand side is twice the energy density of the internal modes.
By \eqref{eq:clenergy} its equilibrium value is $2\rho\varepsilon^I=(D-3)p$,
with $\varepsilon^I$ as in \eqref{eq:recurrence}; hence \eqref{eq:Piidentity}
states that $-\frac32\Pi$ is exactly the deviation of the internal-mode
energy density from its equilibrium value. In a monatomic gas the
left-hand side is zero and $\alpha=0$; thus $\Pi=0$. The scalar pressure trace and the energy are then
one and the same field, up to their fixed factors. This is the
reduction from 14 to 13 independent classical fields.

The linearized MEP distribution shows how the corresponding pole in
$1/(D-3)$ arises. Introduce the dimensionless molecular energies
\[
 X=\frac{m|\bm\xi|^2}{2\theta},\qquad Y=\frac I\theta,
 \qquad \kappa=\frac32.
\]
Under \eqref{eq:classical}, $X$ and $Y$ are independent gamma
variables of shapes $\kappa$ and $\alpha$, respectively, both
with unit scale. Their moments are
\begin{align*}
 \av X_\alpha&=\kappa,& \av{X^2}_\alpha&=\kappa(\kappa+1),\\
 \av Y_\alpha&=\alpha,& \av{Y^2}_\alpha&=\alpha(\alpha+1),\\
 \av{XY}_\alpha&=\kappa\alpha.
\end{align*}
Indeed, setting $X=m|\bm\xi|^2/(2\theta)$ in the radial
Maxwellian measure gives the density
$X^{1/2}e^{-X}/\Gamma(3/2)$; its first two moments follow by
the gamma recurrence. The moments of $Y$ follow from
\eqref{eq:clpower}, and independence gives the mixed moment.
Consider only the isotropic dynamic-pressure variation, keeping
$n$ and $\varepsilon$ fixed. Its linearized MEP correction has
the form $\chi=A+B X+C Y$, where $A,B,C$ are scalar coefficients;
it is a first variation of $f/f_E$, with $f_E$ the equilibrium
distribution.
Conservation of number and energy, and the definition of $\Pi$,
require
\begin{equation}
 \av\chi_\alpha=0,\quad
 \av{(X+Y)\chi}_\alpha=0,\quad
 \av{X\chi}_\alpha=\kappa\frac\Pi p.
 \label{eq:scalarconstraints}
\end{equation}
The first equation gives $A=-\kappa B-\alpha C$.
After this substitution, the last two equations reduce to
$\kappa B+\alpha C=0$ and $\kappa B=\kappa\Pi/p$.
Consequently $A=0$, $B=\Pi/p$, and
\begin{equation}
 \chi=\frac\Pi p\left(X-\frac{\kappa}{\alpha}Y\right).
 \label{eq:scalarMEP}
\end{equation}
The apparent pole is the coefficient required to keep the total energy
fixed while transferring energy between translation and internal modes;
it is not a divergence of an equilibrium moment. Expanding the square and
using the second moments above,
\begin{equation}
 \av{\chi^2}_\alpha
 =\left(\frac\Pi p\right)^2
   \left(\kappa+\frac{\kappa^2}{\alpha}\right)
 =\frac32\,\frac{D}{D-3}\left(\frac\Pi p\right)^2 .
 \label{eq:chi2}
\end{equation}
The second variation of the entropy along the positive two-temperature MEP
family introduced below is
$-\tfrac12k_Bn\av{\chi^2}_\alpha$ at equilibrium. Hence
\eqref{eq:chi2} already shows that the entropic cost of a small dynamic
pressure diverges like $D/(D-3)$. The next subsection shows what happens
beyond the quadratic approximation.

\subsection{Dynamic pressure near the monatomic limit: a bound and an entropic corner}

\begin{proposition}[Bound on the dynamic pressure]
\label{prop:bound}
Let $D>3$ and let $f\geq0$ be any distribution function with $n>0$ and
finite $F_{ll}$, $G_{ll}$. Then
\begin{equation}
 -p<\Pi<\frac{D-3}{3}\,p .
 \label{eq:bound}
\end{equation}
The bounds are sharp.
\end{proposition}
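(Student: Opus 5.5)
The plan is to read both inequalities directly off the exact identity \eqref{eq:Piidentity} and the definition \eqref{eq:tracefields}, using only $f\geq0$ and $\phi(I)=I^a>0$ for $I>0$. For the lower bound, \eqref{eq:tracefields} gives $3(p+\Pi)=F_{ll}=\iint m|\bm\xi|^2f\phi\dd I\dd^3\bm\xi$. The integrand is nonnegative, so $\Pi\geq-p$. Equality $F_{ll}=0$ would force $f\phi=0$ almost everywhere off the null set $\{\bm\xi=0\}$, hence $f=0$ a.e., which contradicts $n>0$. So $\Pi>-p$.

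For the upper bound, \eqref{eq:Piidentity} gives $3\Pi=(D-3)p-2\iint If\phi\dd I\dd^3\bm\xi$. The last integral is nonnegative, and it vanishes only if $f\phi=0$ a.e.\ on $\{I>0\}$, i.e.\ $f=0$ a.e. This again contradicts $n>0$, so $\Pi<(D-3)p/3$. One point needs checking here: $p$ must be defined through the temperature matched to the energy, $p=nk_BT$ with $\rho\varepsilon=\frac D2 p$, so that $p=G_{ll}/D>0$. The finiteness of $F_{ll}$ and $G_{ll}$ ensures that all the integrals used in the subtraction are finite.

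Sharpness is the step I expect to need care, because the bounds are strict and cannot be attained. The plan is to exhibit families of admissible $f$ with $n$ and $G_{ll}$ (hence $p$) fixed, along which $\Pi$ tends to each endpoint.
\begin{itemize}
\item Toward $\Pi\to(D-3)p/3$: take $f$ whose internal-energy factor concentrates near $I=0$ while keeping positive mass, for instance a Maxwellian in $\bm\xi$ times a factor supported in $I\in(0,\epsilon)$, normalized against $\phi$ (integrable since $a>-1$). Rescale the velocity temperature so that $G_{ll}=Dp$. Then $\iint If\phi\to0$ and $\Pi\to(D-3)p/3$.
\item Toward $\Pi\to-p$: take $f$ whose velocity distribution concentrates at $\bm\xi=0$, for instance a narrow Maxwellian, times an internal-energy law with mean $Dp/(2n)$, for instance a gamma-type law $I^{\beta}e^{-I/\theta'}$ relative to $\phi$ with $\theta'$ chosen to match the energy. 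Then $F_{ll}\to0$, so $\Pi\to-p$.
\end{itemize}
Both families are nonnegative and have finite moments. This shows the constants in \eqref{eq:bound} cannot be improved.
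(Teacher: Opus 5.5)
Your proposal is correct and follows the paper's proof. You get strict positivity of $F_{ll}$ and of $\iint If\phi\dd I\dd^3\bm\xi$ from $f\geq0$, $n>0$ and $\phi>0$ on $(0,\infty)$, then read off the two bounds from \eqref{eq:tracefields} and \eqref{eq:Piidentity}. For sharpness the paper does not build ad hoc families; it invokes the two-temperature family $f_*$ of Proposition~\ref{prop:corner}, which attains every $\hat\Pi$ in the open interval. Its endpoints correspond to $T_{\rm tr}\to0$ and $T_{\rm int}\to0$, which is exactly the velocity and internal-energy concentration mechanism your families use.
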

\begin{proof}
Since $\phi(I)\dd I=I^{\alpha-1}\dd I$ has a positive density on
$(0,\infty)$ and $\iint f\phi\dd I\dd^3\bm\xi=n>0$, the set where $f>0$
has positive measure for $I^{\alpha-1}\dd I\dd^3\bm\xi$, and on this set
$I>0$ and $|\bm\xi|>0$ almost everywhere. Hence
$\iint If\phi\dd I\dd^3\bm\xi>0$ and $F_{ll}>0$. The first inequality
and \eqref{eq:Piidentity} give $\Pi<(D-3)p/3$; the second and
\eqref{eq:tracefields} give $\Pi>-p$. Sharpness follows from the
two-temperature family in Proposition~\ref{prop:corner} below, for which
$\Pi$ fills the whole interval.
\end{proof}

The interval \eqref{eq:bound} is the domain of the nonlinear six-field
theory obtained by MEP in \cite{Ruggeri2016} (see also \cite{ARST2015} and
\cite[Chap.~13]{RuggeriSugiyama2021});
here it follows from \eqref{eq:Piidentity} for \emph{every} nonnegative
distribution. Its physical meaning is transparent: a positive dynamic
pressure is an excess of translational energy, which can only be taken
from the internal modes, and they cannot give more than they hold. The
argument does not use the polytropic form of $\phi$. For a general
density of states, \eqref{eq:Piidentity} is replaced by
$2\iint If\phi\dd I\dd^3\bm\xi=2\rho\varepsilon^I(T)-3\Pi$, with
$\varepsilon^I$ as in \eqref{eq:recurrence}, and the same reasoning gives $\Pi<p\,\varepsilon^I(T)/\varepsilon^K(T)$ with
$\varepsilon^K=\frac32k_BT/m$. This is the bound obtained for the
MEP distribution of a non-polytropic gas in \cite{BRS2018}; here it holds
for every nonnegative distribution. As $D\to3^+$, the upper endpoint of the admissible interval
\eqref{eq:bound} tends to zero: positive dynamic pressures are excluded in
the limit, whereas negative values remain admissible. Thus \eqref{eq:bound}
alone does not imply $\Pi\to0$.

For negative $\Pi$ the situation is different, as the exact entropy
shows. The entropy maximizer involves two temperatures. One is the
translational temperature, whose difference from $T$ is proportional to the
dynamic pressure \cite{ARST2015}. The other is
the temperature of the internal modes \cite{BRS2018}; see
\cite{ARS2017} for separate rotational and vibrational temperatures, and
\cite[Chaps.~13--15]{RuggeriSugiyama2021} for a unified presentation.
Write $\hat\Pi=\Pi/p$ and let $h=-k_B\iint f\log f\,\phi\dd I\dd^3\bm\xi$
be the entropy density.

\begin{proposition}[Entropic corner]
\label{prop:corner}
Among all $f\geq0$ with given $n$, $\varepsilon$ (hence $T$), zero
momentum and given $\Pi$, with $-1<\hat\Pi<(D-3)/3$, the entropy is
maximized by the two-temperature distribution
\cite{Ruggeri2016,ARST2015,BRS2018}
\begin{equation}
 f_*=n\,M_{T_{\rm tr}}(\bm\xi)\,
 \frac{e^{-I/(k_BT_{\rm int})}}{\Gamma(\alpha)(k_BT_{\rm int})^\alpha},
 \label{eq:twotemp}
\end{equation}
with $T_{\rm tr}=T(1+\hat\Pi)$ and $T_{\rm int}=T\bigl(1-3\hat\Pi/(D-3)\bigr)$.
Its entropy is $h_*=\rho s_E(\rho,T)+\rho\,k(\hat\Pi)$, where $s_E$ is the
equilibrium specific entropy and
\begin{equation}
 k(\hat\Pi)=\frac{k_B}{2m}\log\Bigl[(1+\hat\Pi)^3
 \Bigl(1-\frac{3\hat\Pi}{D-3}\Bigr)^{D-3}\Bigr].
 \label{eq:kPi}
\end{equation}
In the following limits $n$ and $T$ are held fixed; in parts (b)--(c), so is $\hat\Pi\in(-1,0)$. As $D\to3^+$:
\begin{itemize}
\item[(a)] $k''(0)=-\dfrac{3k_B}{2m}\dfrac{D}{D-3}\to-\infty$;
\item[(b)] for each fixed $\hat\Pi\in(-1,0)$,
$k(\hat\Pi)\to k_0(\hat\Pi)=\dfrac{3k_B}{2m}\log(1+\hat\Pi)$, which is finite
and has the nonzero one-sided slope $k_0'(0^-)=3k_B/(2m)$;
\item[(c)] for each fixed $\hat\Pi\in(-1,0)$, the internal-energy law of
$f_*$ still converges weakly to $\delta_0$, while its mean energy per
molecule tends to $-\tfrac32\Pi/n>0$.
\end{itemize}
\end{proposition}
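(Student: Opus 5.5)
The proof has three parts: identify the maximizer by Gibbs' inequality, compute its entropy, and then take the three limits.

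\emph{Maximizer.} For any admissible $f$, compare with $f_*$ through Gibbs' inequality $-\iint f\log f\,\phi\leq-\iint f\log f_*\,\phi$, with equality iff $f=f_*$ a.e. Since $\log f_*=c_0-m|\bm\xi|^2/(2k_BT_{\rm tr})-I/(k_BT_{\rm int})$ (using that $M_{T_{\rm tr}}$ is centred at zero velocity), the right-hand side depends on $f$ only through $n$, $F_{ll}$ and $\iint If\phi$. The first two are fixed by the constraints, and the third is fixed by identity \eqref{eq:Piidentity}. Hence it equals $-k_B\iint f_*\log f_*\,\phi$.

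It remains to check that $f_*$ satisfies the constraints.
\begin{itemize}
\item Its translational trace is $3nk_BT_{\rm tr}=3(p+\Pi)$, which forces $T_{\rm tr}=T(1+\hat\Pi)$.
\item Its internal energy density is $n\alpha k_BT_{\rm int}$ by \eqref{eq:clpower}. Matching this to $\alpha p-\tfrac32\Pi$ from \eqref{eq:Piidentity} gives $T_{\rm int}=T(1-3\hat\Pi/(2\alpha))=T(1-3\hat\Pi/(D-3))$.
\end{itemize}
Both temperatures are positive exactly on the stated interval. This also establishes the sharpness claim of Proposition~\ref{prop:bound}.

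\emph{Entropy.} The distribution $f_*$ is a product of a Maxwellian at $T_{\rm tr}$ and a gamma law at $T_{\rm int}$, so its entropy splits into a translational and an internal part. The translational part depends on temperature through $\tfrac32 nk_B\log T_{\rm tr}$. The internal part depends on temperature through $\alpha nk_B\log T_{\rm int}$, because the $\phi$-weighted entropy of the gamma law contains $\log(\Gamma(\alpha)\theta^\alpha)$ plus $T$-independent terms. The additive constants are the same as in equilibrium, where $T_{\rm tr}=T_{\rm int}=T$ gives $\rho s_E$. Therefore
\[
h_*-\rho s_E=nk_B\bigl[\tfrac32\log(1+\hat\Pi)+\alpha\log(1-\hat\Pi/\alpha\cdot\tfrac32)\bigr].
\]
Writing $nk_B=\rho k_B/m$ and $\alpha=(D-3)/2$ gives \eqref{eq:kPi}.

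\emph{Limits.}
\begin{itemize}
\item[(a)] Direct differentiation gives $k''(0)=\frac{k_B}{2m}\bigl[-3-9/(D-3)\bigr]=-\frac{3k_B}{2m}\frac{D}{D-3}$, which is consistent with \eqref{eq:chi2}.
\item[(b)] For fixed $\hat\Pi<0$, we have $(D-3)\log(1+3|\hat\Pi|/(D-3))\to0$, because $x\log(1+c/x)\to0$ as $x\to0^+$. This is the one delicate point: it holds only because $\hat\Pi<0$ keeps the argument of the logarithm above $1$. For $\hat\Pi>0$ the argument would eventually become negative. The limit is therefore $k_0=\frac{3k_B}{2m}\log(1+\hat\Pi)$, and its derivative at $0^-$ is $3k_B/(2m)$.
\item[(c)] The internal law of $f_*$ is the gamma law with shape $\alpha$ and scale $k_BT_{\rm int}$. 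Its mean is $\alpha k_BT_{\rm int}=\alpha k_BT-\tfrac32k_BT\hat\Pi\to-\tfrac32\Pi/n$.
\end{itemize}

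The main obstacle is weak convergence in (c), because $T_{\rm int}\sim 3T|\hat\Pi|/(2\alpha)$ blows up. Theorem~\ref{thm:concentration} does not apply with a fixed $w$. I would instead estimate $\mu([b,\infty))$ directly. With $\theta_\alpha=k_BT_{\rm int}$, a change of variables gives
\[
\mu([b,\infty))=\int_{b/\theta_\alpha}^\infty y^{\alpha-1}e^{-y}\dd y\,/\Gamma(\alpha)\leq \alpha\Gamma(\alpha)^{-1}\alpha^{-1}\bigl[\log(\theta_\alpha/b)+1\bigr].
\]
Since $\alpha\Gamma(\alpha)\to1$, this bound is $O(\alpha\log(1/\alpha))\to0$. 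This gives $\mu\weakto\delta_0$ even though the mean stays positive, with the mean carried by a tail of vanishing mass.
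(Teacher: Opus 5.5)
Your proposal is correct and follows essentially the same route as the paper: Gibbs' inequality with $\log f_*$ a linear combination of $1$, $m|\bm\xi|^2$ and $I$; the product splitting of the entropy into $\tfrac32\log T_{\rm tr}+\alpha\log T_{\rm int}$ plus constants; direct differentiation for (a) and $x\log(1+c/x)\to0$ for (b); and, for (c), the same tail bound $\Pr(I>b)\leq[\log(\theta_\alpha/b)+1]/\Gamma(\alpha)=O(\alpha\log(1/\alpha))$. The only variation is that you fix $T_{\rm int}$ from the identity for $\iint If\phi\dd I\dd^3\bm\xi$, whereas the paper uses total energy conservation, and these are equivalent.
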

\begin{proof}
\emph{Maximizer.} The constraints fix
$\iint f\phi$, $\iint m|\bm\xi|^2f\phi=3(p+\Pi)$, $\iint If\phi$ (by
\eqref{eq:Piidentity}) and the momentum. The function $\log f_*$ is a
linear combination of $1$, $m|\bm\xi|^2$ and $I$, so
$\iint f\log f_*\phi=\iint f_*\log f_*\phi$ for every admissible $f$. By
Gibbs' inequality $\iint f\log(f/f_*)\phi\geq0$, with equality only for
$f=f_*$, hence $h(f)\leq h(f_*)$. The temperatures follow from the
constraints: $3nk_BT_{\rm tr}=F_{ll}=3(p+\Pi)$ gives $T_{\rm tr}$, and
energy conservation $\frac32nk_BT_{\rm tr}+\alpha nk_BT_{\rm int}
=(\frac32+\alpha)nk_BT$ gives $T_{\rm int}=T(1-3\hat\Pi/(2\alpha))$, with
$2\alpha=D-3$. Both are positive exactly on the interval \eqref{eq:bound}.

\emph{Entropy.} Since $f_*$ is a product, $-\iint f_*\log f_*\phi$ is the
sum of the Maxwellian and gamma contributions. Using
$\int M_T\log M_T\dd^3\bm\xi=-\frac32\log(2\pi e\,k_BT/m)$ and
$\av{I}=\alpha k_BT_{\rm int}$ for the gamma law, one finds, per molecule,
$k_B\bigl[\frac32\log T_{\rm tr}+\alpha\log T_{\rm int}\bigr]$ plus terms
depending only on $n$ and $\alpha$. Subtracting the equilibrium value,
$T_{\rm tr}=T_{\rm int}=T$, and dividing by $m$ gives \eqref{eq:kPi}; it
coincides with the result of \cite{Ruggeri2016}.

\emph{(a)} Differentiating twice,
\[
 k''(\hat\Pi)=-\frac{k_B}{2m}\Bigl[\frac{3}{(1+\hat\Pi)^2}
 +\frac{9}{(D-3)\bigl(1-\frac{3\hat\Pi}{D-3}\bigr)^2}\Bigr],
\]
so $k''(0)=-\frac{k_B}{2m}\bigl(3+\frac9{D-3}\bigr)=-\frac{3k_B}{2m}\frac{D}{D-3}$,
in agreement with \eqref{eq:chi2}, since $\rho k''(0)/2=-\frac12k_Bn\av{\chi^2}_\alpha/\hat\Pi^2$.
Equivalently, $\rho k\simeq-\Pi^2/(2T\Gamma_0)$ with
$\Gamma_0=2(D-3)p/(3D)$, which is the entropy of the near-equilibrium
six-field theory \cite{ARST2015}.

\emph{(b)} For $\hat\Pi<0$ put $x=D-3\to0^+$; then
$x\log(1+3|\hat\Pi|/x)\to0$, which gives $k_0$. Its derivative at $0^-$ is
$3k_B/(2m)$.

\emph{(c)} The internal energy is $I=k_BT_{\rm int}Y$, where $Y$ is gamma
with shape $\alpha$ and unit scale, and
$k_BT_{\rm int}=k_BT(1+3|\hat\Pi|/(2\alpha))$. Hence
$\alpha k_BT_{\rm int}\to\frac32|\hat\Pi|k_BT=-\frac32\Pi/n$. For fixed $b>0$,
\[
 t_\alpha=\frac{b}{k_BT_{\rm int}}
 =\frac{\alpha b}{k_BT(\alpha+\frac32|\hat\Pi|)}
 \sim\frac{2b}{3|\hat\Pi|k_BT}\,\alpha,
\]
and, for small $\alpha$,
$\Pr(I>b)\leq[\log(1/t_\alpha)+1]/\Gamma(\alpha)
\leq C_b\alpha(1+|\log\alpha|)\to0$. Thus the internal law converges weakly
to $\delta_0$.
\end{proof}

Proposition~\ref{prop:corner} explains the singular limit. For a nearly
monatomic gas the entropy of the dynamic-pressure mode is concave, with
its maximum at equilibrium and a curvature of order $D/(D-3)$. In the limit the maximum sits at the boundary point $\Pi=0$: although
$k_\alpha'(0)=0$ for every $D>3$, the limiting branch has the nonzero
one-sided slope $k_0'(0^-)=3k_B/(2m)$. We refer to this boundary-slope
phenomenon as an entropic corner. Three consequences follow.

First, the main-field component conjugate to $\Pi$ is proportional to
$k'(\hat\Pi)$, which is linear in $\Pi$ only for
$|\hat\Pi|\ll(D-3)/3$. Closures linear in $\Pi$, such as the
near-equilibrium ET14 closure, are therefore accurate only in a range that
shrinks to zero as $D\to3$.

Second, positive $\Pi$ becomes impossible for every nonnegative
distribution, whereas negative $\Pi$ keeps a finite entropy difference from equilibrium. It
describes a state in which the energy $-\frac32\Pi$ per unit volume is
stored in internal modes whose number of states vanishes, at the internal
temperature $T_{\rm int}\simeq3|\Pi|T/((D-3)p)\to\infty$. Part (c) shows
that the probability of internal energy exceeding any fixed positive threshold
vanishes while the energy does not: $I$ is not uniformly integrable, and the energy escapes to
arbitrarily large internal energies. Such states converge weakly to a law
supported at $I=0$, but they do not converge to the monatomic state with the
prescribed total energy. Thus a family of data approaching the monatomic
limit must satisfy $\Pi_D\to0$ at fixed finite limiting pressure; the
exactly compatible choice $\Pi_D=0$ used in
\cite{ArimaEtAl2013,ArimaEtAl2016} is a natural stronger preparation.

Negative $\Pi$ is not excluded by positivity. For the exact nonlinear
six-field MEP closure, the symmetric-hyperbolic formulation remains valid
throughout the admissible interval \eqref{eq:bound}, because both effective
temperatures in \eqref{eq:twotemp} are positive there
\cite{Ruggeri2016}; see also \cite{ARST2015} and
\cite[Chap.~13]{RuggeriSugiyama2021}. The singularity of the monatomic limit
is therefore not caused by a loss of hyperbolicity for fixed negative
$\Pi$: it is the failure of uniform energy control exhibited in part (c).

Third, for the exponentially decaying equilibrium weights used here,
Proposition~\ref{prop:first} gives $\int I^r\dd\mu_\alpha=O(\alpha)$
whenever the corresponding moment is finite. In the two-temperature family
above, the escape occurs with $T_{\rm int}$ of order $1/(D-3)$.

Finally, the limit removes internal energy relative to the thermal
scale: cooling at fixed $\alpha$ concentrates $I$ at zero, but $I/(k_BT)$
keeps the gamma law of shape $\alpha$.

\section{Relativistic gas}
\label{sec:rel}
Work in the rest frame, retaining $m$, $T$, $k_B$, and $\theta$ as
defined above, and taking $n$ to be the rest-frame number density.
Let $c$ be the speed of light, $\bm{p}\in\mathbb R^3$ the spatial
molecular momentum and $p^0$ the time component of its four-momentum, with
$p^0=\sqrt{m^2c^2+|\bm{p}|^2}$,
$\tau=p^0/(mc)$, and $\gamma=mc^2/(k_BT)>0$.
The generalized J\"uttner
equilibrium \cite{PennisiRuggeri2017}, \cite[Chap.~26]{RuggeriSugiyama2021}, has kernel
$e^{-(\gamma+I/\theta)\tau}$.
Its normalized particle law is
\begin{equation}
 \dd\nu_\alpha^{\rm rel}
 =\frac{I^{\alpha-1}e^{-(\gamma+I/\theta)\tau}}
        {\mathcal Z_\alpha}\dd I\dd^3\bm{p},
 \label{eq:reljoint}
\end{equation}
where $\mathcal Z_\alpha$ is the integral of the numerator;
$\av{A}_\alpha:=\int A\dd\nu_\alpha^{\rm rel}$ now denotes
the mean of an integrable scalar observable $A(\bm p,I)$.
The measure $\dd^3\bm{p}=p^0\dd\bm{P}$ is the particle-density measure
associated with the invariant measure $\dd\bm{P}=\dd^3\bm{p}/p^0$.

Introduce the radial rapidity $s\geq0$ through
$|\bm{p}|=mc\sinh s$ and $\tau=\cosh s$. For $z>0$ and real $q$, define
\begin{equation}
 J_{2,q}(z)=\int_0^\infty e^{-z\cosh s}\sinh^2s\cosh^q s\dd s.
 \label{eq:J}
\end{equation}
Angular integration contributes a constant $4\pi(mc)^3$.
Consequently the internal-energy marginal $\mu_\alpha^{\rm rel}$ has the form
\eqref{eq:measure} with
$w(I)=J_{2,1}(\gamma+I/\theta)$. It is continuous and positive
at zero, and
\[
 0<w(I)\leq e^{-I/\theta}J_{2,1}(\gamma),
\]
because $\cosh s\geq1$. Thus
$\mu_\alpha^{\rm rel}\weakto\delta_0$.
Unlike \eqref{eq:classical}, the joint law couples momentum and
internal energy through $\tau I$.

The coupling has a direct probabilistic meaning. At a fixed momentum
$\bm p$, the conditional law of $I$ is a gamma distribution of shape
$\alpha$ and scale $\theta/\tau$, with density
\begin{equation}
 \frac{(\tau/\theta)^\alpha}{\Gamma(\alpha)}
 I^{\alpha-1}e^{-\tau I/\theta}.
 \label{eq:conditional}
\end{equation}
Its mean is $\alpha\theta/\tau$. This follows by integrating
the density and its first moment with $y=\tau I/\theta$ and using
$\Gamma(\alpha+1)=\alpha\Gamma(\alpha)$.
Conversely, integration over $I$ multiplies the momentum weight
$e^{-\gamma\tau}$ by $\tau^{-\alpha}$, apart from a constant
absorbed into the normalization. Larger momenta therefore receive a
smaller relative weight than in the monatomic J\"uttner law.
Both effects disappear as $\alpha\to0^+$, but they contribute
to the first-order energy correction.

The moment limit can be proved explicitly.
For each fixed momentum, the change of variable $y=\tau I/\theta$
gives, for every integer $r\geq0$,
\begin{align*}
 \int_0^\infty I^{\alpha+r-1}e^{-\tau I/\theta}\dd I
 &=\left(\frac\theta\tau\right)^{\alpha+r}
   \int_0^\infty y^{\alpha+r-1}e^{-y}\dd y\\
 &=\left(\frac\theta\tau\right)^{\alpha+r}\Gamma(\alpha+r).
\end{align*}
In particular, the normalization in \eqref{eq:reljoint} is
\[
 \mathcal Z_\alpha=\theta^\alpha\Gamma(\alpha)
 \int e^{-\gamma\tau}\tau^{-\alpha}\dd^3\bm p.
\]
If the observable is $h(\bm p)I^r$, integrate its numerator in
$I$ by the same formula and divide by this normalization.
For every polynomial $h(\bm p)$ one obtains
\begin{align}
 \av{hI^r}_\alpha
 &=\theta^r\frac{\Gamma(\alpha+r)}{\Gamma(\alpha)}
 \frac{\int h(\bm{p})e^{-\gamma\tau}\tau^{-\alpha-r}\dd^3\bm{p}}
      {\int e^{-\gamma\tau}\tau^{-\alpha}\dd^3\bm{p}}.
 \label{eq:relmom}
\end{align}
For fixed $\gamma>0$, dominated convergence gives
\[
 \frac{\int h(\bm p)e^{-\gamma\tau}\tau^{-\alpha-r}\dd^3\bm p}
      {\int e^{-\gamma\tau}\tau^{-\alpha}\dd^3\bm p}
 \longrightarrow \av{h\tau^{-r}}_\gamma,
\]
the denominator tending to a strictly positive finite number. Since the gamma
ratio is $1$ for $r=0$ and $\alpha(r-1)!+O(\alpha^2)$ for $r\geq1$,
\eqref{eq:relmom} yields the monatomic mean for $r=0$, while for $r\geq1$ it becomes
\begin{equation}
 \av{hI^r}_\alpha
 =\alpha(r-1)!\theta^r\av{h\tau^{-r}}_\gamma
 +o(\alpha),
 \label{eq:relorder}
\end{equation}
where $\av{\cdot}_\gamma$ denotes the normalized monatomic
particle measure $e^{-\gamma\tau}\dd^3\bm{p}$.
Expanding a polynomial in $I$ proves the full polynomial moment
limit. The same argument applies to finite powers of $p^0$, and
to these functions divided by $p^0$, since $p^0\geq mc>0$.
It therefore covers the usual invariant-measure equilibrium moments,
and gives the measure-level interpretation of the relativistic
$a\to-1^+$ procedure used in the earlier RET models.

\subsection{Application: the constitutive integrals of relativistic RET14}
Consider the 14-field theory of Pennisi and Ruggeri
\cite{PennisiRuggeri2017}, whose monatomic limit was studied in
\cite{CarrisiPennisiRuggeri2019}. Its energy-momentum integrals
contain the factor $1+I/(mc^2)$, and the third-order moment
integrals contain $1+2I/(mc^2)$.
Extend \eqref{eq:J} by
\[
 J_{\ell,q}(z)=\int_0^\infty e^{-z\cosh s}
                  \sinh^\ell s\cosh^q s\dd s,
\]
where $\ell$ is a nonnegative integer and $q$ is real. Let $B$
be a fixed polynomial of $y=I/(mc^2)$.
A typical normalized radial integral has the form
\begin{equation}
 R_{\ell,q}^{B}(\alpha)=
 \frac{\int_0^\infty I^{\alpha-1}B(I/(mc^2))
             J_{\ell,q}(\gamma+I/\theta)\dd I}
      {\int_0^\infty I^{\alpha-1}J_{2,1}(\gamma+I/\theta)\dd I}.
 \label{eq:radialratio}
\end{equation}
The denominator diverges as $\alpha\to0^+$; when $B(0)\neq0$ the
numerator has the corresponding $1/\alpha$ leading term. In all cases the ratio
has the limit
\begin{equation}
 R_{\ell,q}^{B}(\alpha)\longrightarrow
 B(0)\frac{J_{\ell,q}(\gamma)}{J_{2,1}(\gamma)}.
 \label{eq:radiallimit}
\end{equation}
To prove it, take in Theorem~\ref{thm:concentration}
$w(I)=J_{2,1}(\gamma+I/\theta)$ and
\[
 G(I)=B(I/(mc^2))
       \frac{J_{\ell,q}(\gamma+I/\theta)}
            {J_{2,1}(\gamma+I/\theta)}.
\]
The denominator is strictly positive, so $G$ is continuous and
locally bounded. Moreover,
\[
 |G(I)|w(I)\leq |B(I/(mc^2))|e^{-I/\theta}J_{\ell,q}(\gamma),
\]
which verifies the tail condition for every fixed positive exponent.
The left-hand side of \eqref{eq:radialratio} is precisely
$\int G\dd\mu_\alpha^{\rm rel}$, and $G(0)$ is the right-hand
side of \eqref{eq:radiallimit}.

For example, the energy integral corresponds to
$\ell=2$, $q=2$, and $B(y)=1+y$. The time-component third-order
integral, after removal of its fixed dimensional factors, is
\begin{align}
 R_{2,3}^{1+2y}(\alpha)
 &=\av{\tau^2(1+2I/(mc^2))}_\alpha\nonumber\\
 &\longrightarrow\frac{J_{2,3}(\gamma)}{J_{2,1}(\gamma)}.
 \label{eq:thirdexample}
\end{align}
In both examples the limiting coefficient is obtained by evaluating the
regular part of the integral at $I=0$, without expanding two divergent
internal-energy integrals and cancelling their singular terms.

The same applies to the linear equations for the MEP multipliers, whose
coefficients are equilibrium integrals of products of two moment functions
(times $cp^i/p^0$ for fluxes), all of polynomial growth: \eqref{eq:relmom}
and \eqref{eq:radiallimit} give their limits at once. The solved
constitutive coefficients converge whenever the limiting linear system is
nonsingular; otherwise the first-order terms of
Proposition~\ref{prop:first} are needed, as will be shown in a
forthcoming paper.

\subsection{Exact energy and first correction to the Synge formula}
Let $e=\rho(c^2+\varepsilon)$ be the relativistic equilibrium energy
density in the rest frame, where $\rho=mn$ and $\varepsilon$ is the
specific internal energy excluding rest energy. The energy per particle is
$e/n=mc^2\av{(1+I/(mc^2))\tau}_\alpha$.
Write $x=\tau=\cosh s\geq1$. For any integrable function $F(x)$, the
monatomic average introduced above reads
\begin{equation}
 \av F_\gamma=\frac{1}{J_{2,1}(\gamma)}
 \int_0^\infty F(x)e^{-\gamma x}\sinh^2s\,x\dd s,
 \label{eq:monoaverage}
\end{equation}
and $\Cov_\gamma(A,B)=\av{AB}_\gamma-\av A_\gamma\av B_\gamma$ denotes the
corresponding covariance.

\begin{proposition}[Exact energy]
\label{prop:exactenergy}
For every $\alpha>0$ and $\gamma>0$,
\begin{equation}
 \frac{e}{nmc^2}
 =\frac{\av{x^{1-\alpha}}_\gamma}
        {\av{x^{-\alpha}}_\gamma}
  +\frac{\alpha}{\gamma}.
 \label{eq:exactenergy}
\end{equation}
\end{proposition}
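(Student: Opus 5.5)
We split the energy per particle as $e/n=mc^2\av{\tau}_\alpha+\av{\tau I}_\alpha$ and evaluate both averages with the exact moment formula \eqref{eq:relmom}. No limit is taken, so the identity holds for every $\alpha>0$.

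\emph{First term.} With $h=\tau$ and $r=0$, \eqref{eq:relmom} gives
\[
 \av{\tau}_\alpha=\frac{\int e^{-\gamma\tau}\tau^{1-\alpha}\dd^3\bm p}{\int e^{-\gamma\tau}\tau^{-\alpha}\dd^3\bm p}.
\]
In rapidity variables, $\dd^3\bm p=4\pi(mc)^3\sinh^2s\cosh s\dd s$ after angular integration. Each integral is therefore $4\pi(mc)^3 J_{2,1}(\gamma)$ times the monatomic average \eqref{eq:monoaverage} of $x^{1-\alpha}$, respectively $x^{-\alpha}$. The common constant cancels, leaving $\av{x^{1-\alpha}}_\gamma/\av{x^{-\alpha}}_\gamma$.

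\emph{Second term.} The quickest route is the conditional law \eqref{eq:conditional}. At fixed $\bm p$, $I$ is gamma with mean $\alpha\theta/\tau$, so $\tau\,\mathbb E[I\mid\bm p]=\alpha\theta$ is constant. Hence $\av{\tau I}_\alpha=\alpha\theta$ exactly.

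As a cross-check, \eqref{eq:relmom} with $h=\tau$ and $r=1$ gives $\theta\,\Gamma(\alpha+1)/\Gamma(\alpha)=\alpha\theta$ times the ratio $\int e^{-\gamma\tau}\tau^{1-\alpha-1}\dd^3\bm p/\int e^{-\gamma\tau}\tau^{-\alpha}\dd^3\bm p=1$, which is the same result. Dividing by $mc^2$ and using $\theta/(mc^2)=1/\gamma$ yields $\alpha/\gamma$. Adding the two terms gives \eqref{eq:exactenergy}.

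The only points needing care are integrability and the legitimacy of Fubini. The integrand is nonnegative, so Tonelli applies. Moreover, $x^{1-\alpha}e^{-\gamma x}\sinh^2 s\cosh s$ is integrable for $\gamma>0$, and $\av{x^{-\alpha}}_\gamma>0$. The main obstacle, such as it is, is keeping the normalization straight: the brackets $\av{\cdot}_\gamma$ use the weight $\sinh^2 s\,x$, i.e.\ the particle measure $\dd^3\bm p$ and not the invariant measure. Once that is fixed, the Jacobian factors cancel between numerator and denominator.
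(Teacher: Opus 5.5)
Your proposal is correct and follows essentially the same route as the paper's proof. You obtain the first term from \eqref{eq:relmom} with $h=\tau$, $r=0$, so that integration over $I$ reweights the J\"uttner law by $x^{-\alpha}$, and the second term from the conditional gamma mean $\alpha\theta/\tau$ of \eqref{eq:conditional}, giving $\av{I\tau}_\alpha=\alpha\theta$ exactly; your explicit rapidity bookkeeping and the $r=1$ cross-check are accurate additions.
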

\begin{proof}
By \eqref{eq:relmom} with $r=0$ and $h=\tau$, the first term is
$\av\tau_\alpha$: integration over $I$ reweights the J\"uttner law by
$\tau^{-\alpha}=x^{-\alpha}$. For the second term, conditional integration
of $I$ with \eqref{eq:conditional} gives $\av{I\mid\bm p}=\alpha\theta/\tau$,
hence $\av{I\tau}_\alpha=\alpha\theta$ exactly, and
$\alpha\theta/(mc^2)=\alpha/\gamma$.
\end{proof}

A direct quadrature of the defining integrals of \cite{PennisiRuggeri2017}
confirms \eqref{eq:exactenergy} to nine digits for finite $\alpha$ and $\gamma$.

\begin{proposition}[Nearly monatomic Synge energy]
\label{prop:synge}
For fixed $\gamma>0$,
\begin{align}
 \frac{e}{nmc^2}&=\omega_M(\gamma)+\alpha\omega_1(\gamma)
                   +o(\alpha),\nonumber\\
 \omega_M(\gamma)&=\frac{J_{2,2}(\gamma)}{J_{2,1}(\gamma)},\nonumber\\
 \omega_1(\gamma)&=\frac1\gamma-\Cov_\gamma(x,\log x).
 \label{eq:omega}
\end{align}
\end{proposition}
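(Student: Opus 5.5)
Starting from the exact formula \eqref{eq:exactenergy} of Proposition~\ref{prop:exactenergy}, the second term $\alpha/\gamma$ is already linear in $\alpha$ and contributes the $1/\gamma$ part of $\omega_1$. It remains to expand the ratio $N(\alpha)/D(\alpha)$, with $N(\alpha)=\av{x^{1-\alpha}}_\gamma$ and $D(\alpha)=\av{x^{-\alpha}}_\gamma$, to first order in $\alpha$. At $\alpha=0$ we have $D(0)=1$ and $N(0)=\av x_\gamma=J_{2,2}(\gamma)/J_{2,1}(\gamma)=\omega_M(\gamma)$ by \eqref{eq:monoaverage}. This gives the leading term.

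For the first order, I will show that $N$ and $D$ are differentiable at $\alpha=0^+$ with
\[
 N'(0)=-\av{x\log x}_\gamma,\qquad D'(0)=-\av{\log x}_\gamma .
\]
Since $x\geq1$, we have $\log x\geq0$. The mean value theorem gives
\[
 \bigl|(x^{-\alpha}-1)/\alpha\bigr|\leq\log x,
 \qquad
 \bigl|(x^{1-\alpha}-x)/\alpha\bigr|\leq x\log x,
\]
for $0<\alpha\leq1$. Both bounds are integrable against $e^{-\gamma x}\sinh^2 s\,x\dd s$ for fixed $\gamma>0$, because of the exponential decay in $x=\cosh s$. Dominated convergence then yields the difference quotients, so $N(\alpha)=N(0)+\alpha N'(0)+o(\alpha)$ and $D(\alpha)=1+\alpha D'(0)+o(\alpha)$. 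An equivalent route is to apply Proposition~\ref{prop:first}-style reasoning, but direct dominated convergence in $\alpha$ is simpler here, since no singular normalization remains once $I$ has been integrated out.

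Then
\[
 \frac{N}{D}=N(0)+\alpha\bigl(N'(0)-N(0)D'(0)\bigr)+o(\alpha)
 =\omega_M-\alpha\bigl(\av{x\log x}_\gamma-\av x_\gamma\av{\log x}_\gamma\bigr)+o(\alpha),
\]
and the bracket is exactly $\Cov_\gamma(x,\log x)$. Adding $\alpha/\gamma$ gives \eqref{eq:omega}. The only step needing care is the domination justifying differentiation under the integral at the endpoint $\alpha=0$; the rest is algebra. One could also note that the expansion is in fact $O(\alpha^2)$, since the functions are analytic in $\alpha$, but only $o(\alpha)$ is claimed.
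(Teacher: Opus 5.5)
Your proposal is correct and follows the paper's proof essentially step by step. Like the paper, you start from the exact identity \eqref{eq:exactenergy}, bound the difference quotients by $\log x$ and $x\log x$ so that dominated convergence expands the numerator and denominator, and combine the first-order terms into $-\Cov_\gamma(x,\log x)$ before adding $\alpha/\gamma$; your side remark that the remainder is actually $O(\alpha^2)$ is also right, because the factor $e^{-\gamma x}$ makes $\av{x^{-\alpha}}_\gamma$ and $\av{x^{1-\alpha}}_\gamma$ analytic near $\alpha=0$.
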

\begin{proof}
All moments involving powers of $x$ and $\log x$ are integrable under
\eqref{eq:monoaverage}. Pointwise,
$(x^{-\alpha}-1)/\alpha\to-\log x$ and
$(x^{1-\alpha}-x)/\alpha\to-x\log x$; moreover their absolute values are
bounded by $\log x$ and $x\log x$, respectively. Dominated convergence gives
\begin{align*}
 \av{x^{-\alpha}}_\gamma
 &=1-\alpha\av{\log x}_\gamma+o(\alpha),\\
 \av{x^{1-\alpha}}_\gamma
 &=\av x_\gamma-\alpha\av{x\log x}_\gamma+o(\alpha).
\end{align*}
Put $L_\gamma=\av{\log x}_\gamma$ and
$M_\gamma=\av{x\log x}_\gamma$. Since the denominator tends
to one, the identity $(1-u)^{-1}=1+u+u^2/(1-u)$ yields
\[
 \frac{1}{\av{x^{-\alpha}}_\gamma}
 =1+\alpha L_\gamma+o(\alpha).
\]
Multiplying this by the numerator expansion gives
\begin{align*}
 \frac{\av{x^{1-\alpha}}_\gamma}{\av{x^{-\alpha}}_\gamma}
 &=\bigl(\av x_\gamma-\alpha M_\gamma+o(\alpha)\bigr)
   \bigl(1+\alpha L_\gamma+o(\alpha)\bigr)\\
 &=\av x_\gamma
   +\alpha\bigl(\av x_\gamma L_\gamma-M_\gamma\bigr)
   +o(\alpha)\\
 &=\av x_\gamma-\alpha\Cov_\gamma(x,\log x)+o(\alpha).
\end{align*}
The leading term is $\av x_\gamma=J_{2,2}/J_{2,1}$ by
\eqref{eq:monoaverage}. Finally add the exact internal contribution
$\alpha/\gamma$ in \eqref{eq:exactenergy}. In particular $\omega_1$ is
the one-sided derivative of $e/(nmc^2)$ with respect to $\alpha$ at zero,
with $n$ and $T$ fixed.
\end{proof}

To recover the familiar form of the Synge formula, let
$K_j(\gamma)=\int_0^\infty e^{-\gamma\cosh s}\cosh(js)\dd s$ be the
modified Bessel functions of the second kind. The identity
$\sinh^2s\cosh s=(\cosh3s-\cosh s)/4$ gives
$J_{2,1}=(K_3-K_1)/4$, and integration of the derivative of
$e^{-\gamma\cosh s}\sinh(2s)$, whose endpoint values vanish,
gives $K_3-K_1=4K_2/\gamma$. Thus $J_{2,1}=K_2/\gamma$.
Differentiating the convergent integrals yields
$J_{2,2}=-J_{2,1}'$ and $K_2'=-(K_3+K_1)/2$. Therefore
\[
 \omega_M=-\frac{K_2'}{K_2}+\frac1\gamma
 =\frac{K_1}{K_2}+\frac3\gamma,
\]
and $\varepsilon=c^2(\omega_M-1)$ at the monatomic endpoint.

The first correction contains the internal contribution $\alpha k_BT$ and
a change of the translational distribution, reweighted by $x^{-\alpha}$,
which lowers its mean energy. Indeed, if $X,Y$ are independent copies of
$x$ with law \eqref{eq:monoaverage}, then
\[
 \Cov_\gamma(x,\log x)=\tfrac12
 \mathbb E[(X-Y)(\log X-\log Y)]>0 .
\]
Since $p=nk_BT=nmc^2/\gamma$, the expansion reads
\begin{equation}
 \frac{e}{p}=\gamma\,\omega_M(\gamma)+\frac{D-3}{2}\,\beta(\gamma)+o(D-3),
 \qquad \beta=\gamma\,\omega_1 .
 \label{eq:beta}
\end{equation}
The function $\beta$ is the normalized first-order coefficient measuring
the internal-mode contribution relative to the classical share
$\frac12k_BT$ per internal degree of freedom (Fig.~\ref{fig:beta}).

\begin{proposition}[Classical and ultra-relativistic limits]
\label{prop:betalimits}
The function $\beta(\gamma)=1-\gamma\Cov_\gamma(x,\log x)$ satisfies
$\beta<1$ for every $\gamma>0$, and
\begin{align*}
 \beta(\gamma)&=1-\frac{3}{2\gamma}+o\Bigl(\frac1\gamma\Bigr)
 &&(\gamma\to\infty),\\
 \beta(\gamma)&\to0 &&(\gamma\to0^+).
\end{align*}
\end{proposition}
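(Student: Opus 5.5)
The inequality $\beta<1$ is immediate from the covariance representation already noted before the proposition. Since $\log$ is strictly increasing, $(X-Y)(\log X-\log Y)\ge0$, with equality only on the null set $X=Y$. The law \eqref{eq:monoaverage} has a density on $[1,\infty)$, so it is non-degenerate, and therefore $\Cov_\gamma(x,\log x)>0$. Multiplying by $\gamma>0$ gives $\beta<1$.

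For the classical limit $\gamma\to\infty$, I will put $x=1+u$ and use the rapidity measure. With $x=\cosh s$ we have $\sinh^2 s\,x\,\mathrm{d}s=x\sqrt{x^2-1}\,\mathrm{d}x$, so the law of $u$ has density proportional to
\[
e^{-\gamma u}(1+u)\sqrt{u(2+u)}=\sqrt2\,u^{1/2}e^{-\gamma u}\bigl(1+O(u)\bigr).
\]
Laplace's method (Watson's lemma) then shows that $\gamma u$ converges to a gamma law of shape $3/2$, which is the Maxwellian translational energy. The moments come with relative corrections of order $1/\gamma$:
\[
\Var_\gamma(u)=\frac{3}{2\gamma^2}\bigl(1+O(1/\gamma)\bigr).
\]
Next, $\log x=u-u^2/2+O(u^3)$, which gives
\[
\Cov(x,\log x)=\Var(u)-\tfrac12\Cov(u,u^2)+O\bigl(\av{u^3}\bigr),
\]
and the last two terms are $O(\gamma^{-3})$. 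Hence $\gamma\Cov_\gamma(x,\log x)=\frac{3}{2\gamma}+O(\gamma^{-2})$, which is the claimed expansion. The only care needed is uniform control of the Taylor remainders, and this is routine because every moment of $u$ is $O(\gamma^{-k})$ under the exponential weight.

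For the ultra-relativistic limit $\gamma\to0^+$, I will rescale $x=v/\gamma$. The weight $e^{-\gamma x}x\sqrt{x^2-1}\,\mathrm{d}x$ becomes, up to normalization, $e^{-v}v^2\,\mathrm{d}v$ on $[\gamma,\infty)$, with relative corrections of order $\gamma^2/v^2$. The covariance is translation invariant in $\log x$, so
\[
\Cov_\gamma(x,\log x)=\frac1\gamma\Cov(v,\log v).
\]
Therefore
\[
\beta\to1-\Cov_{\Gamma(3)}(v,\log v).
\]
For a gamma law of shape $k$ one has $\av{v\log v}=k\,\psi(k+1)$ and $\av{\log v}=\psi(k)$, so the covariance equals $k(\psi(k+1)-\psi(k))=1$. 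Thus $\beta\to0$.

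The main obstacle is justifying that the limit passes through the covariance. The factor $\sqrt{x^2-1}=x\sqrt{1-1/x^2}$ differs from $x$ only near $x=1$, that is near $v=\gamma$, and this region carries mass $O(\gamma^3)$. I will use dominated convergence on the rescaled integrals with $v^2e^{-v}(1+|\log v|)v$ as the dominating function, and this should close the argument.
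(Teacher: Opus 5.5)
Your proof is correct and follows the paper's own route. The paper likewise uses positivity of the symmetrized covariance, the rescaling $x=v/\gamma$ with dominated convergence and the digamma identity $k(\psi(k+1)-\psi(k))=1$, and the convergence of $\gamma(x-1)$ to the gamma law of shape $3/2$ together with a controlled Taylor remainder for $\log x$. Your Watson's-lemma version even gives the sharper error $O(\gamma^{-2})$ in place of $o(1/\gamma)$.

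One small correction: $v^3e^{-v}(1+|\log v|)$ does not dominate the rescaled density $e^{-v}v\sqrt{v^2-\gamma^2}\le v^2e^{-v}$ near $v=0$, nor its $\log v$-moment. Use instead the majorant $(1+v)(1+|\log v|)v^2e^{-v}$, as the paper does.
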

\begin{proof}
The inequality $\beta<1$ is the positivity of the covariance shown above.
Since $\sinh^2s\cosh s\dd s=x\sqrt{x^2-1}\dd x$, the law
\eqref{eq:monoaverage} has density proportional to
$e^{-\gamma x}x\sqrt{x^2-1}$ on $[1,\infty)$.

\emph{Ultra-relativistic limit.} Put $y=\gamma x$ and extend
$q_\gamma(y)=e^{-y}y\sqrt{y^2-\gamma^2}$ by zero on $0<y\leq\gamma$.
Then $q_\gamma\to y^2e^{-y}$ pointwise and
$\int_0^\infty q_\gamma\dd y\to\int_0^\infty y^2e^{-y}\dd y=2$.
The majorant $(1+y)(1+|\log y|)y^2e^{-y}$ controls normalization and the
covariance moments. Since $\log x=\log y-\log\gamma$,
$\gamma\Cov_\gamma(x,\log x)\to\Cov(y,\log y)=3\psi(4)-3\psi(3)=1$
for the normalized gamma law $y^2e^{-y}/2$. Hence $\beta\to0$.

\emph{Classical limit.} Put $x=1+z/\gamma$. The density of $z>0$ is
proportional to $e^{-z}(1+z/\gamma)\sqrt{z(2+z/\gamma)}$, which for
$\gamma\geq1$ is dominated by $e^{-z}(1+z)\sqrt{z(2+z)}$ and tends to
$\sqrt2\,z^{1/2}e^{-z}$, the gamma law of shape $3/2$. Now
$\gamma\Cov_\gamma(x,\log x)=\Cov(z,\log(1+z/\gamma))$. Writing
$\log(1+u)=u-R(u)$ with $0\leq R(u)\leq u^2/2$ for $u\geq0$,
\[
 \Cov\bigl(z,\log(1+z/\gamma)\bigr)=\frac{\Var(z)}{\gamma}
 -\Cov\bigl(z,R(z/\gamma)\bigr),
\]
and $|\Cov(z,R(z/\gamma))|\leq\av{(z+\av z)z^2}/(2\gamma^2)=O(\gamma^{-2})$.
For the varying normalized law, $\av z_\gamma\to3/2$ and
$\av{z^2}_\gamma\to15/4$, hence $\Var_\gamma(z)\to3/2$ and
$\gamma\Cov_\gamma(x,\log x)=3/(2\gamma)+o(1/\gamma)$.
\end{proof}

\begin{figure}[t]
\centering
\includegraphics[width=\columnwidth]{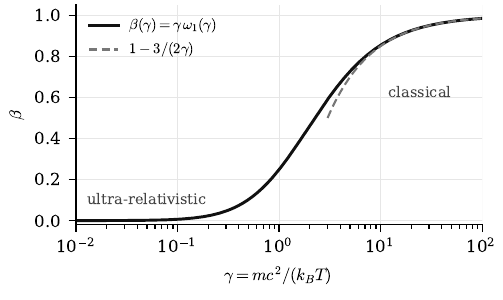}
\caption{First-order coefficient $\beta(\gamma)$ in Eq.~\eqref{eq:beta}; dashed:
classical asymptote $1-3/(2\gamma)$.}
\label{fig:beta}
\end{figure}

In the classical limit \eqref{eq:beta} reproduces equipartition,
$e/p\simeq\gamma+\frac D2$; in the ultra-relativistic limit the internal
modes give no first-order contribution, consistently with
\cite{PennisiRuggeri2017UR}, where the ultra-relativistic energy for $D$
close to $3$ was found to be the monatomic value $e=3p$. Numerically $\beta$ increases monotonically with $\gamma$; at high temperature
the momentum reweighting cancels the direct internal contribution $\alpha k_BT$.

\section{Conclusion}
The normalized polyatomic equilibrium law, not $I^a$ in isolation, converges
to the monatomic prescription $I=0$, with an explicit first correction. In
classical ET14, the internal-mode energy differs from
$\rho\varepsilon^I=(D-3)p/2$ by $-3\Pi/2$: positive $\Pi$ is excluded as
$D\to3$, whereas fixed negative $\Pi$ remains admissible but stores finite
energy at diverging internal temperature and therefore does not approach the
monatomic state with the prescribed total energy. This explains the need for
compatible data and the shrinking validity range of closures linear in $\Pi$.
Relativistically, the conditional gamma structure gives the RET14 constitutive
limits and an exact energy formula whose first correction interpolates between
classical equipartition and zero in the ultra-relativistic regime. General
moment systems and the regular/singular criterion will be treated separately.

\section*{Acknowledgements}
This work was carried out within the activities of GNFM/INdAM
(Gruppo Nazionale per la Fisica Matematica).

\end{document}